\documentclass[12pt]{article}

\newcommand{\bC}{\mathbb {C}}

\newcommand{\cO}{\mathcal{O}}

\newcommand{\cH}{\mathcal{H}}

\newcommand{\bN}{\mathbf{N}}

\newcommand{\cQ}{\mathcal{Q}}

\newcommand{\bR}{\mathbf{R}}

\newcommand{\bz}{{\bar{z}}}

\newcommand{\ii}{\mathrm{i}}

\usepackage{enumerate}
\usepackage{amsmath}    
\usepackage{amsthm}     
\usepackage{amscd}      
\usepackage{amssymb}    

\usepackage{eucal}      
\usepackage{latexsym}   
\usepackage{graphicx}   
\usepackage{verbatim}   
\usepackage{gimmsy2e}   
\usepackage{epstopdf}
\usepackage{color}
\usepackage[colorlinks=true, pdfstartview=FitV, linkcolor=blue, citecolor=blue, urlcolor=blue]{hyperref}
\usepackage{stmaryrd}
\usepackage{etex}
\usepackage[utf8]{inputenc}
\usepackage[TS1,T1]{fontenc}
\usepackage{xspace}
\usepackage{lmodern}

\usepackage{pifont,calc,color,relsize}

\newcommand{\blackcircle}[1]{\makebox[0pt][l]{{\relsize{2.3}\raisebox{-0.25\totalheight}{\ding{108}}}}\makebox[\widthof{\relsize{2}\ding{108}}][c]{\textcolor{white}{#1}}}

\newcommand{\bmu}{\stackrel{{\tiny\blackcircle{$\boldsymbol{\mu}$}}}{\ }}
\newcommand{\bnu}{\stackrel{{\tiny\blackcircle{$\boldsymbol{\nu}$}}}{\ }}
\newcommand{\bpi}{\stackrel{{\tiny\blackcircle{$\boldsymbol{\pi}$}}}{\ }}


\DeclareGraphicsRule{.tif}{png}{.png}{`convert #1 `dirname #1`/`basename #1 .tif`.png}

%
\setlength{\leftmargini}{7.5ex}

 \numberwithin{thm}{subsection}

\setlength{\oddsidemargin}{0.75in}           
\setlength{\evensidemargin}{0.75in}           

\makeatletter
\@addtoreset{equation}{subsection}

\makeatother


\def\intprod{\mathbin{\hbox to 6pt{%
                 \vrule height0.4pt width5pt depth0pt
                 \kern-.4pt
                 \vrule height6pt width0.4pt depth0pt\hss}}}

\newcommand{\ns}{\mspace{-1.5mu}}             
\newcommand{\ps}{\mspace{1.5mu}}              


\newpage
\pagenumbering{arabic}
\thispagestyle{empty}


\definecolor{gray}{rgb}{0.5, 0.5, 0.5}
\definecolor{gold}{rgb}{0.255, 0.215, 0}
\definecolor{cyan}{rgb}{0.0, 0.2, 0.2}
\definecolor{pink}{rgb}{0.2, 0.1, 0.2}

\begin{document}

\baselineskip=20pt

\title{Quantization via Deformation of Prequantization}

\author{
{\sc Christian Duval}
\\[-2pt]
Centre de Physique Th\'eorique, Case 907\\[-2pt]
13288 Marseille Cedex 9 France\\[-2pt]
(email: duval@cpt.univ-mrs.fr)\\[6pt]
\and 
{\sc Mark~J.~Gotay}
\\[-2pt]
Pacific Institute for the Mathematical Sciences\\[-2pt]
University of British Columbia\\[-2pt]
Vancouver, BC  V6T 1Z2  Canada\\[-2pt]
(email: gotay@pims.math.ca)\\[6pt]}

\date{\today} 

\maketitle
\begin{abstract}\footnotesize{We introduce the notion of a ``Souriau bracket'' on a prequantum circle bundle $Y$ over a phase space $X$ and  explain how a deformation of $Y$ in the direction of this bracket provides a \emph{genuine} quantization of $X$.}\\

Keywords: Deformation quantization, geometric quantization, prequantization.
\end{abstract}

\section{Introduction}

Let $(X,\omega)$ be a symplectic manifold representing the phase space of a classical system. Consider a prequantum $U(1)$-bundle $(Y,\alpha)$ over  $(X,\omega)$;  here $\alpha$ is a connection 1-form on $Y$ satisfying the curvature condition $d\alpha = \omega/\hbar$.\footnote{\, A number of identifications will be made throughout; in particular, pullbacks will be ruthlessly suppressed.}  Let $\pi$ be the Poisson bivector associated to $\omega$, and let $\pi^\#$ be the 
horizontal lift of $\pi$ to $Y$ with respect to $\alpha$.\footnote{\, That is, $\pi^{\#}$ is the unique $U(1)$-invariant bivector on $Y$ which vanishes on $\alpha$ and projects to $\pi$ on $X$.} Define the \emph{Souriau bracket} of $f,g \in C^\infty(Y,\mathbb C)$ to be\footnote{\, This is related to the Lagrange (or Jacobi) bracket on $Y$. See \cite{Ma1991}, Example~5 (\S2.3) and Example~2.5.} 
$$\llbracket f,g \rrbracket := \pi{}^{\#}(df,dg).$$
Our aim is to use the Souriau bracket together with a polarization to quantize $(X,\omega)$.

Suppose $\Psi:Y \to \mathbb C$  is a $U(1)$-equivariant map, that is to say, a \emph{prequantum wave function}. Then---and this is a key point---for a \emph{classical observable} $F \in C^\infty(X,\mathbb C)$ the Souriau bracket 
$\llbracket F,\Psi \rrbracket$ is readily verified to be $U(1)$-equivariant as well, so that
 \begin{equation}
{\mathcal PF[\Psi]} := F\Psi + \frac{\hbar}{\ii} \ps \llbracket F,\Psi  \rrbracket
\label{pqop}
\end{equation}
 is also a prequantum wave function. It turns out 
that  $\mathcal PF[\Psi]$ is none other than the prequantum operator corresponding to the classical observable $F$ acting on the prequantum wave function $\Psi$. 

\medskip

The relation \eqref{pqop} brings to mind the lowest order terms in the expression (in the `deformation parameter' $\hbar$) for a `star' product $F \bullet \Psi$ on $Y$ with driver the Souriau bracket. However, while the latter is readily verified to satisfy the Leibniz rule, it is not a Lie bracket as the Jacobi identity fails.\footnote{\, Indeed, the Schouten-Nijenhuis bracket $[\pi^{\#},\pi^{\#}] = -2 \pi^{\#} \wedge \eta$, where $\eta$  
is the Reeb vector field of $(Y,\alpha)$.} So such a $\bullet$ cannot be an associative deformation of $C^\infty(Y,\mathbb C)$ and hence not, strictly speaking, a star product.\footnote{\, A star product has died.}

Nonetheless, it is natural to wonder if  one can develop a formula 
`extending'  \eqref{pqop} to an appropriate ``quantum product'' $\bullet$ on $Y$, with driver the Souriau bracket, in such a way that 
 \begin{equation*}
\mathcal QF[\Psi]  :=   F \bullet \Psi 
\end{equation*}
gives a genuine \emph{quantum operator} $\mathcal QF$ corresponding to $F \in C^\infty(X,\mathbb C)$? It turns out that we can, at least under certain circumstances---thus in this sense we obtain quantization via a deformation of prequantization.  We show here how this works.

Further motivation for our approach stems from the observation that the two key classes of objects one must consider for quantization are observables (i.e., elements of $C^\infty(X,\mathbb C)$) and prequantum wave functions (i.e.,  $U(1)$-equivariant functions on $Y$). As  these naturally form subspaces  $\mathcal O$ and $\mathcal H$  of $C^\infty(Y,\mathbb C)$, respectively, it seems appropriate to take $Y$ as the arena for quantization.  In addition, the introduction of the quantum product $\bullet$  will `add functionality' to the geometric quantization scheme. (As is well known, the latter has difficulty quantizing sufficiently many observables. The introduction of a quantum product will mollify this, cf. Example 5C.)

\section{Coordinate Expressions}
\setcounter{equation}{0}

If $p_i,q^j$ are canonical coordinates on $(X,\omega)$, and $\theta$ is the (angular) fiber coordinate on $Y$, then locally
 \begin{equation*}
\omega = dp_i \wedge dq^i
\end{equation*}
and
 \begin{equation}
 \label{connection}
\alpha = \frac{1}{\hbar}\ps p_i\ps dq^i + d\theta.
\end{equation}
The Reeb vector field is thus $\eta = \partial_\theta$.
 The associated Poisson bivector $\pi = -\omega^{-1}$  is
\begin{equation*}
\label{pb}
\pi=\partial_{p_i}\wedge\partial_{q^i}
\end{equation*}
and so the \emph{Souriau bivector}  $\pi^{\#} = \partial_{p_i}{}^{\#}\wedge\partial_{q^i}{}^{\#}$ is
\begin{equation*}
\label{st}
\pi{}^{\#}=\partial_{p_i}\wedge\left(\partial_{q^i} - \frac{p_i}{\hbar}\partial_\theta\right),
\end{equation*}
whence 
\begin{equation}
\label{sb}\llbracket f,g\rrbracket = \frac{\partial f}{\partial p_j}\left(\frac{\partial g}{\partial q^j} - \frac{p_j}{\hbar}\frac{\partial g}{\partial \theta}\right) - \frac{\partial g}{\partial p_j}\left(\frac{\partial f}{\partial q^j} - \frac{p_j}{\hbar}\frac{\partial f}{\partial \theta}\right) .
\end{equation}

Using the local expression $\Psi = \psi(p,q)e^{\ii\theta}$ for the prequantum wave function and \eqref{pqop}, we obtain
 \begin{equation*}
\mathcal PF[\Psi] = \left(\frac{\partial F}{\partial p_j}\left(\frac{\hbar}{\ii}\frac{\partial \psi}{\partial q^j}-  p_j\ps  \psi  \right) - \frac{\hbar}{\ii}\frac{\partial F}{\partial q^j}\frac{\partial \psi}{\partial p_j}+ F\psi\right)\ns e^{\ii\theta},
\label{pqop1}
\end{equation*}
from which is evident that $\mathcal PF$ is indeed the prequantization of the classical observable $F$.

\section{The Quantization Construction}
\setcounter{equation}{0}

Our approach seeks to find a `middle ground' between geometric quantization and deformation quantization, utilizing the most successful aspects of both.\footnote{\ Useful references on geometric and deformation quantization are \cite{Sn1980,So1997}, and \cite{DS2002}, respectively.}
It is not surprising, then, that  the central elements of our quantization construction are prequantization,  a ``quantum product'' and a polarization. We have briefly discussed prequantization in the first section.

We assume  that we have a star-product $\star_\lambda$ on $X$, with deformation parameter $\lambda $.\footnote{\, Such always exist by virtue of \cite{DWLe1983}. We do not consider questions of convergence here.} The driver of this star product is a contravariant 2-tensor $\Lambda$ on $X$.
 The most important case is when $\Lambda$ is the Poisson bivector itself, but we will also consider other possibilities. 

In addition to  the star product we will need a deformation quantization of $\cH$ with respect to $\star_\lambda$ in the sense of \cite{BNWW2010}. 
By this we mean a one-parameter family of 
$\mathbb C [[\lambda]]$-bilinear $U$(1)-equivariant mappings
$$
\bullet_\lambda :\cO[[\lambda]]\times{}\cH[[\lambda]]\to{}\cH[[\lambda]]
$$
satisfying
\begin{equation}
\label{Module}
(F \star_\lambda G) \bullet_\lambda \Psi = F \bullet_\lambda(G\bullet_\lambda \Psi).
\end{equation}
We require further that the \emph{prequantum product} $\bullet_\lambda$ be of the form 
$$F\bullet_\lambda \Psi = \sum_{k=0}^\infty \lambda^k c_k(F,\Psi),$$
where each $c_k$ is bidifferential, $c_0(F,\Psi) =F\ps \Psi$ and the driver of $\bullet_\lambda$ is $c_1(F,\Psi) = \Lambda^\#(dF,d\Psi)$.
Theorem 1.6 of \cite{BNWW2010} implies that such a deformation quantization of $\cH$ exists and is unique up to equivalence.\footnote{\ This reference actually considered deformation quantizations of $C^\infty(Y,\mathbb C)$. Since the subspace $\cH$ thereof consists of $U$(1)-equivariant functions, a deformation quantization of  $C^\infty(Y,\mathbb C)$ automatically restricts to one of $\cH$. }

Finally, we suppose that $(X,\omega)$, with $\dim X = 2n$, is equipped with a polarization $J$. 
A prequantum wave function $\Psi$ is \emph{polarized} provided $\zeta^{\#}[\Psi] = 0$ for all $\zeta \in J$. The space of all polarized prequantum wave functions (or simply ``wave functions'' for short) is denoted $\mathcal H_J$.
We say that   the prequantum product is \emph{compatible} with the polarization if whenever $\Psi$ is polarized, then so is $F \bullet_\lambda \Psi$ for all $F \in \cO$. Then \eqref{Module} implies that
$\cH_J$ is a left $\big(\cO,\star_\lambda\big)$-module, and we call $\bullet_\lambda$ a \emph{quantum product}.

Thus far Planck's constant has not appeared in our formul\ae\ (except in the expression \eqref{sb} for the Souriau bracket). To turn our constructions into a quantization in a \emph{physical} sense we now insist that we evaluate the expansions above at $\ii \lambda = \hbar$, the numerical value of Planck's constant. When the driver of the star product is the Poisson bracket we then have 
\begin{equation}
\label{expansion}
F \bullet_\hbar \Psi = {F\, \Psi} + \frac{\hbar}{\ii}\, {\llbracket F,\Psi \rrbracket} + \cdots
\end{equation}
for each wave function $\Psi$, which is the raison d'\^etre for our construction of the quantum product.  We caution that \eqref{expansion} cannot \emph{a priori} be interpreted as a power series in $\hbar$, again because $c_1$ depends upon $\hbar$ and also since we do not in general have any control over the $\hbar$-dependence of the $c_k$ for $k \geq 2$.

Henceforth we drop the $\hbar$-dependence in the notation and simply refer to~$\bullet$ as \emph{the} quantum product. Now for $\Psi \in \cH_J$ define
\begin{equation*}
\cQ\in\mathrm{Hom}\big( (\cO,\star),\mathrm{End}(\cH_J,\circ)\big)
\label{Q}
\end{equation*}
by 
\begin{equation}
\label{qop}
\cQ F[\Psi] := F \bullet \Psi;
\end{equation}
  then   \eqref{Module} takes the familiar form
\begin{equation}
\label{convolve} 
\mathcal Q(F\star G)[\Psi] =( \mathcal Q F \circ \mathcal QG)[\Psi].
\end{equation}
Note also that $\cQ(1)=  1_{\cH_J}$. These properties suggest, and the examples in \S 5 will  justify,  identifying $\mathcal Q F$ with the \emph{genuine} quantum operator associated to the observable $F$ in the representation determined by $J$.

  We do not know a general way of constructing quantum products, nor do we know if every prequantum circle bundle (over a polarized symplectic manifold) carries a quantum product. Nonetheless, it is possible to construct such products, at least in a certain class of examples, as we illustrate in the next section.

\section{Quantum Products over Flat Bi-Polarized Manifolds}
\setcounter{equation}{0}

From now on we specialize to the case when $(X,\omega)$ is a flat bi-polarized manifold. By  `bi-polarized' we mean that $X$ carries two transverse real polarizations $J,K$.\footnote{\, This concept is the same as that of a `bi-Lagrangian' manifold' that one finds in the literature (cf. \cite{Bo1993,EST2006}, and references therein),
except that we allow for {\em complex} Lagrangian distributions. This has essentially no effect on what follows.} 
 The first one, $J$, is used to polarize prequantum wave functions as described previously. The second one, $K$, taken together with $J$, allows us to `polarize' the Poisson bivector, i.e., display $\pi$ in a certain normal form which we now describe.

Hess constructed a canonical symmetric symplectic connection $\nabla$ on such a space
and proved the following key result.

\begin{thm}[Hess \cite{He1980}]
 The following conditions are equivalent:
\begin{enumerate}
 \item[{\rm 1.}] $\nabla$ is  flat.
\item[{\rm 2.}] About every point in $X$ there exists a 
Darboux chart 
 $\{p_1,  \ldots , p_n, \linebreak q^1, \ldots, q^n\}$  such that the polarizations  $J$ and $K$ are 	locally	generated 	by	the	Hamiltonian	vector	fields	 $\xi_{q^1} , \ldots, \xi_{q^n}$ and	$\xi_{p_1} , \ldots, \xi_{p_n}$, respectively.
 \end{enumerate}
\end{thm}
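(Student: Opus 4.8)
The plan is to derive everything from the characterisation of $\nabla$ as the \emph{unique} torsion-free symplectic connection preserving the splitting $TX = J\oplus K$ (equivalently, preserving both $J$ and $K$). Existence of a connection with all of these properties at once is part of Hess's construction, which I take as given. Uniqueness is elementary: if $\nabla,\nabla'$ both qualify, their difference $S(X,Y) := \nabla_XY - \nabla'_XY$ is a symmetric $(2,1)$-tensor that sends $\Gamma(J)$-valued second arguments to $\Gamma(J)$ and $\Gamma(K)$-valued ones to $\Gamma(K)$; using symmetry and $J\cap K = 0$ one gets $S(\zeta,\eta) = 0$ whenever $\zeta\in\Gamma(J)$ and $\eta\in\Gamma(K)$, so $S$ is block-diagonal. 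The symplectic condition $\nabla\omega = \nabla'\omega = 0$ makes $T(X,Y,Z) := \omega(S(X,Y),Z)$ symmetric both in $(X,Y)$ and in $(Y,Z)$, hence totally symmetric; feeding in arguments from $J$ and $K$ and using that $J$ and $K$ are Lagrangian and paired nondegenerately by $\omega$ then forces $S\equiv 0$, i.e. $\nabla = \nabla'$.

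For $(2)\Rightarrow(1)$: in the Darboux chart provided by (2) one has $\omega = dp_i\wedge dq^i$, and the Hamiltonian vector fields $\xi_{q^i}$ and $\xi_{p_j}$ equal $\partial_{p_i}$ and $-\partial_{q^j}$ respectively (up to an overall sign depending on conventions), so $J = \mathrm{span}(\partial_{p_1},\dots,\partial_{p_n})$ and $K = \mathrm{span}(\partial_{q^1},\dots,\partial_{q^n})$ are spanned by coordinate vector fields. The flat connection $D$ on this chart determined by $D\partial_{p_i} = D\partial_{q^j} = 0$ is then torsion-free, preserves $J$ and $K$, and satisfies $D\omega = 0$ because $\omega$ has constant coefficients. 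By the uniqueness above, $\nabla = D$ on the chart, so $\nabla$ is flat there; since (2) supplies such a chart about every point, $\nabla$ is flat.

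For $(1)\Rightarrow(2)$, I would fix $x_0\in X$ and a ball $U$ about it on which $\nabla$, being flat and torsion-free, admits a parallel frame; I would choose it to restrict at $x_0$ to a frame $e_1,\dots,e_n$ of $J_{x_0}$ followed by a frame $e_{n+1},\dots,e_{2n}$ of $K_{x_0}$, and propagate it by parallel transport to a parallel frame $E_1,\dots,E_{2n}$ on $U$. Since $\nabla$ preserves $J$, parallel transport does too — the $K$-component of a section that is parallel and initially lies in $J$ obeys a homogeneous linear ODE along each curve and so remains zero — whence $E_1,\dots,E_n\in\Gamma(J)$ and, likewise, $E_{n+1},\dots,E_{2n}\in\Gamma(K)$; thus $J$ and $K$ are spanned by these respective sub-frames. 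Because the $E_a$ are parallel and $\nabla$ is torsion-free, $[E_a,E_b] = \nabla_{E_a}E_b - \nabla_{E_b}E_a = 0$, so there are coordinates $y^1,\dots,y^{2n}$ on a neighbourhood of $x_0$ with $E_a = \partial_{y^a}$; in these coordinates $J$ and $K$ are coordinate subspaces and $\nabla\partial_{y^a} = 0$. Finally $\nabla\omega = 0$ together with $\nabla\partial_{y^a} = 0$ forces the components $\omega(\partial_{y^a},\partial_{y^b})$ to be constant, the Lagrangian conditions annihilate the $J$--$J$ and $K$--$K$ blocks, and nondegeneracy makes the remaining block a constant invertible matrix; a constant linear change of coordinates that mixes only the first $n$ of the $y^a$ among themselves and only the last $n$ among themselves (preserving both the coordinate subspaces $J,K$ and the triviality of $\nabla$) then normalises $\omega$ to $dp_i\wedge dq^i$ with $J = \mathrm{span}(\partial_{p_i})$ and $K = \mathrm{span}(\partial_{q^j})$. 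Since $\xi_{q^i}$ and $\xi_{p_j}$ span the same lines as $\partial_{p_i}$ and $\partial_{q^j}$, condition (2) follows.

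The routine parts are the sign bookkeeping in $(2)\Rightarrow(1)$ and the linear-algebra normalisation of $\omega$. The point demanding care is in $(1)\Rightarrow(2)$: one must know that the parallel transport of $\nabla$ genuinely preserves $J$ and $K$ — which is where the fact that $\nabla$ is Hess's connection, and not merely some flat symplectic connection, is used — and one must check that the final normalisation of $\omega$ is effected by a change of coordinates that does not disturb the adaptedness of the frame to the splitting $TX = J\oplus K$.
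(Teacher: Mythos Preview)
The paper does not supply its own proof of this theorem: it is quoted from Hess \cite{He1980} and used as a black box, so there is nothing in the text to compare your argument against directly.

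That said, your proof is correct and is essentially the standard one. The uniqueness lemma for the Hess connection via the totally symmetric tensor $T(X,Y,Z)=\omega(S(X,Y),Z)$ is right, and it makes $(2)\Rightarrow(1)$ immediate. For $(1)\Rightarrow(2)$ the chain --- adapted parallel frame (using that $\nabla$ preserves $J$ and $K$), commuting frame fields from zero torsion, hence coordinate vector fields, constant components of $\omega$ from $\nabla\omega=0$, block-off-diagonal form from the Lagrangian conditions, and a constant linear change within each block to reach $dp_i\wedge dq^i$ --- is exactly what is needed, and you correctly flag that the step genuinely using Hess's construction (as opposed to an arbitrary flat symplectic connection) is the invariance of $J$ and $K$ under parallel transport.

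One small caveat worth recording: the paper explicitly allows $J$ and $K$ to be \emph{complex} Lagrangian distributions (see the footnote on bi-polarized manifolds). In that generality the parallel frame $E_1,\dots,E_{2n}$ and the functions $y^a$ are a priori complex-valued, and the final ``constant linear change'' must be arranged so that the resulting $p_i,q^j$ are honest local coordinates on the real manifold $X$. This is routine in the two cases that actually occur in the paper (a pair of transverse real polarizations, or a K\"ahler pair $J,\bar J$ as in \S5D), and the authors themselves wave it away, but your write-up of the normalisation step tacitly assumes the real case.
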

 
As a consequence we have the local normal form
\begin{equation}
\pi = 
\partial_{p_j} \wedge \partial_{q^j},
\label{pinf}
\end{equation}
where the $ \partial_{p_j} \; (\; = - \xi_{q^j}) $ lie in $J$ and  the $ \partial_{q^j} \; (\; = \xi_{p_j}) $ lie in $K$. Furthermore it is straightforward to verify that any two such charts are affinely related: on overlaps 
\begin{equation}
\label{overlaps}
p'_j = a_j{}^i\ps p_i + b_j \qquad {\rm and } \qquad q'^j = c^j{}_i\ps q^i + d^j
\end{equation}
where  $a = c^{-1}$. Thus a flat bi-polarized manifold is affine. For more details, including numerous examples, see \cite{Bo1993,EST2006,Li1954}.

Now
suppose that we have a  contravariant 2-tensor $\Lambda$
which on an open set $U \subset X$  can be expanded 
\begin{equation}
\Lambda \ps | \ps U = \sum_\alpha s^ \alpha \otimes t_\alpha
\label{decomp}
\end{equation} 
into tensor products of mutually commuting vector fields $s^ \alpha,t_ \beta $ (which again may be complex-valued) and which transform contragrediently. This is the case in particular when $U$ is a Darboux chart and  $\Lambda$ is (\emph{i}) the Poisson bivector \eqref{pinf},  (\emph{ii}) the associated 
\emph{normal tensor} 
$$\nu = \partial_{p_k} \otimes \partial_{q^k}$$
 and  (\emph{iii}) the \emph{anti-normal} tensor 
$$\mu  =   -  \partial_{q^k} \otimes \partial_{p_k}.$$

We  regard $\Lambda  \ps | \ps U $ as a bilinear map $C^\infty(U,\mathbb C) \otimes C^\infty(U,\mathbb C) \to C^\infty(U,\mathbb C) \otimes C^\infty(U,\mathbb C)$ according to
\begin{equation}
\label{bmap}
(\Lambda  \ps | \ps U ) (F \otimes G) = \sum_ \alpha s^ \alpha[F] \otimes t_ \alpha[G]
\end{equation}
and we then put $(\Lambda  \ps | \ps U )^{k+1} = (\Lambda  \ps | \ps U ) \circ (\Lambda  \ps | \ps U )^{k}$.
Note that once we have fixed a decomposition \eqref{decomp} of $\Lambda$ on $U$ there are no factor-ordering ambiguities when computing $(\Lambda \ps | \ps U ) ^k(F\otimes G)$ for $k \geq 2$. 

Next, cover the bi-polarized manifold $X$ with an  atlas of affine Darboux charts with overlaps as in \eqref{overlaps}. Then if  $U,U'$ are two such domains, equation \eqref{bmap} and the assumption that the $s^\alpha,t_\beta$ transform contragrediently imply that 
$(\Lambda  \ps | \ps U )^k (F \otimes G) = (\Lambda  \ps | \ps U' )^k (F \otimes G)$ on $U\cap U'$.
Thus $\Lambda^k(F\otimes G)$ is \emph{globally} well-defined. 

These are the first crucial consequences of the normal form of Hess' theorem.

A calculation shows that we may use such a $\Lambda$ to drive a (formal) star product $\star$ on $C^\infty(X,\mathbb C)$ of `exponential type,'  i.e., of the form
\begin{equation*}
\label{starprod}
F \star G =   \sum_{k=0}^\infty \left(\frac{\hbar}{\ii}\right)^k \frac{1}{k!}\ m \circ\Lambda^k(F \otimes G)
\end{equation*}
for $F,G \in C^\infty(X,\mathbb C)$
where $m$ is the multiplication operator $m(F \otimes G) = FG$.  For short, we write
$$F\star G = m \circ \exp\left(\frac{\hbar}{\ii}\ps \Lambda\right)(F\otimes G).$$

We horizontally lift this star product to induce a  product on $C^\infty(Y,\mathbb C)$: for $f,g \in C^\infty(Y,\mathbb C)$ we set
\begin{equation}
f \bullet g =  m \circ \exp\left(\frac{\hbar}{\ii}\ps \Lambda^\#\right) (f \otimes g).
\label{qp}
\end{equation}
Since
\begin{equation}
\label{liftoflambda}
\Lambda^\# \ps | \ps U = \sum_\alpha (s^ \alpha)^\# \otimes (t_\alpha)^\#,
\end{equation} 
the product $\bullet$ is also globally well-defined. As one would expect $F\bullet G = $ \linebreak[4] $= F \star G$  for $F,G \in \cO$.
Furthermore
 \begin{prop} 
 $\big(C^\infty(Y,\mathbb C),\bullet\big)$ is a left $(\cO,\star)$-module, i.e.,
 \begin{equation}
 (F \star G) \bullet h = F\bullet (G\bullet h)
 \label{module}
 \end{equation}
 for all $F,G \in \cO$ and $h \in C^\infty(Y, \mathbb C).$
  \label{BundleDQ}
\end{prop}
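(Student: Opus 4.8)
The plan is to check \eqref{module} locally, in a Darboux chart supplied by Hess's theorem, and then globalize. Fix such a chart $U$, so that on $U$ the driver has the normal form $\Lambda\ps|\ps U=\sum_\alpha s^\alpha\otimes t_\alpha$ of \eqref{decomp}, with the $s^\alpha$ and $t_\alpha$ mutually commuting, and $\Lambda^\#\ps|\ps U=\sum_\alpha(s^\alpha)^\#\otimes(t_\alpha)^\#$ by \eqref{liftoflambda}; both $\star$ and $\bullet$ are then given by their explicit exponential formulas. I would first record that an element of $\cO$ is $U(1)$-invariant, hence independent of the fibre coordinate, so the horizontal lifts act on it as the underlying base fields: $(s^\alpha)^\#[F]=s^\alpha[F]$ and $(t_\alpha)^\#[F]=t_\alpha[F]$ whenever $F\in\cO$. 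In particular $F\bullet G=F\star G$; moreover, when $\Lambda^\#$ is iterated on $F\otimes h$ with $F\in\cO$ the left tensor slot remains in $\cO$ at every stage (the $s^\alpha$ preserve $\cO$), so that $F\bullet h=m\circ\exp\!\big(\tfrac{\hbar}{\ii}\sum_\alpha s^\alpha\otimes(t_\alpha)^\#\big)(F\otimes h)$: only the ``half-lifted'' driver occurs.

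Next I would pass to the threefold tensor product $C^\infty(U,\bC)\otimes C^\infty(U,\bC)\otimes C^\infty(Y,\bC)$, with slots $1,2$ carrying $F,G\in\cO$ and slot $3$ carrying $h$, and rewrite each side of \eqref{module} applied to $F\otimes G\otimes h$ as a composite of the multiplication $m$ with the exponential operators above. Since every $s^\alpha$ and every $(t_\alpha)^\#$ is a derivation, multiplication can be commuted past the exponentials using the Leibniz rule (the coproduct $v\mapsto v\otimes\Id+\Id\otimes v$); on slot $2$ a $(t_\alpha)^\#$ encounters an element of $\cO$ and so acts there merely as $t_\alpha$. Introducing on the triple product the operators $\Lambda_{12}=\sum_\alpha s^\alpha\ns\otimes\ns t_\alpha\ns\otimes\ns\Id$, $B_1=\sum_\alpha s^\alpha\ns\otimes\ns\Id\ns\otimes\ns(t_\alpha)^\#$ and $B_2=\sum_\alpha\Id\ns\otimes\ns s^\alpha\ns\otimes\ns(t_\alpha)^\#$ (each field acting as a differential operator on the slot shown), this bookkeeping turns the left side of \eqref{module} into $m_{(3)}\circ\exp(\tfrac{\hbar}{\ii}\Lambda_{12})\,\exp\!\big(\tfrac{\hbar}{\ii}(B_1+B_2)\big)$ and the right side into $m_{(3)}\circ\exp(\tfrac{\hbar}{\ii}\Lambda_{12})\,\exp(\tfrac{\hbar}{\ii}B_1)\,\exp(\tfrac{\hbar}{\ii}B_2)$, where $m_{(3)}$ is total multiplication. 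As the $s^\alpha,t_\alpha$ mutually commute on $X$, $\Lambda_{12}$ commutes with both $B_1$ and $B_2$, so the proposition reduces to the single operator identity $\exp\!\big(\tfrac{\hbar}{\ii}(B_1+B_2)\big)=\exp(\tfrac{\hbar}{\ii}B_1)\exp(\tfrac{\hbar}{\ii}B_2)$, i.e.\ to $[B_1,B_2]=0$.

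This is the heart of the proof and the only point where the geometry of the prequantum bundle intervenes. One computes $[B_1,B_2]=\sum_{\alpha,\beta}s^\alpha\ns\otimes\ns s^\beta\ns\otimes\ns[(t_\alpha)^\#,(t_\beta)^\#]$, and the curvature condition $d\alpha=\omega/\hbar$ yields $[(t_\alpha)^\#,(t_\beta)^\#]=[t_\alpha,t_\beta]^\#-\hbar^{-1}\omega(t_\alpha,t_\beta)\,\eta$; since $[t_\alpha,t_\beta]=0$ this leaves $[B_1,B_2]=-\hbar^{-1}\sum_{\alpha,\beta}\omega(t_\alpha,t_\beta)\,s^\alpha\ns\otimes\ns s^\beta\ns\otimes\ns\eta$. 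The step to be verified --- and the one I expect to be the crux --- is the vanishing of this curvature term, which is the second crucial consequence of Hess's normal form: the $t_\alpha$ may be taken to span an isotropic (indeed Lagrangian) distribution --- $K$ in the case $\Lambda=\nu=\partial_{p_k}\otimes\partial_{q^k}$ and $J$ in the case $\Lambda=\mu$ --- whence $\omega(t_\alpha,t_\beta)=0$ for all $\alpha,\beta$ and $[B_1,B_2]=0$. This proves \eqref{module} on $U$.

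Finally, $\star$ and $\bullet$ are globally well-defined by \eqref{liftoflambda} and the contragredient transformation law for the $s^\alpha,t_\alpha$, and the identity just obtained is manifestly chart-independent, so \eqref{module} holds on all of $Y$. The Leibniz/coproduct bookkeeping on the threefold tensor product is routine but the most error-prone part to write out; the substantive inputs are the mutual commutativity of the $s^\alpha,t_\alpha$ downstairs and the isotropy, upstairs, of the distribution they span.
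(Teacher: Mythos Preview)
Your argument is essentially the paper's one-line proof (``expand both sides multinomially, use that the $s^\alpha,t_\beta$ mutually commute, reindex'') recast in coproduct/exponential language; the reduction to $[B_1,B_2]=0$ makes precise what the paper leaves implicit. Where you go beyond the paper is in observing that on the third slot one must commute the \emph{lifted} fields $(t_\alpha)^\#$, and that
\[
[(t_\alpha)^\#,(t_\beta)^\#]=[t_\alpha,t_\beta]^\#-\hbar^{-1}\omega(t_\alpha,t_\beta)\,\eta=-\hbar^{-1}\omega(t_\alpha,t_\beta)\,\eta
\]
vanishes only when the $t_\alpha$ are $\omega$-isotropic. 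The paper's proof appeals only to the commutativity of the $s^\alpha,t_\beta$ on $X$ and does not flag this point.

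Your isotropy hypothesis is not an artifact of your method but a genuine requirement. For $\Lambda=\nu$ or $\mu$ the $t_\alpha$ span the Lagrangian $K$ or $J$, respectively, and your proof is complete. For $\Lambda=\pi$, however, the $t_\alpha$ comprise \emph{all} of the $\partial_{p_k},\partial_{q^k}$, isotropy fails, and so does \eqref{module}: with $n=1$, $F=p$, $G=q$ one computes directly
\[
(p\star_\pi q)\bpi h \;-\; p\bpi(q\bpi h)\;=\;\frac{\hbar^2}{2}\,[\partial_p,\partial_q^{\#}][h]\;=\;-\frac{\hbar}{2}\,\eta[h],
\]
which is nonzero for generic $h\in C^\infty(Y,\bC)$ (indeed even for $h\in\cH$). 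So your explicit restriction to $\nu$ and $\mu$ is exactly right; the proposition as stated, read to include the driver $\pi$, is literally too strong, and the paper's brief proof does not address this. For the drivers the paper actually needs in Propositions~\ref{polprop} and the examples, your argument and the paper's coincide and are both valid.
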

\begin{proof}
Expand both sides of \eqref{module} using \eqref{qp} and the multinomial formula. Since the $s^\alpha, t_\beta$ mutually commute, we may then rearrange and  reindex one side to obtain the other.
\end{proof}

By virtue of its construction in terms of horizontal lifts, $\bullet$ is $U$(1)-equivar\-iant. Thus it is a prequantum product, and so provides a deformation quantization of $C^\infty(Y, \mathbb C).$ 

We next prove that $\bnu$ is a genuine quantum product, $\bnu$ being given by \eqref{qp} with $\Lambda$   the normal tensor $\nu$. This is the second crucial consequence of Hess' theorem.

\begin{prop} If $H$ is an observable and $\Psi$ is $J$-polarized, $H \bnu \Psi$ is also $J$-polarized. 
 \label{polprop}
\end{prop}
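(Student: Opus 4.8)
\section*{Proof proposal for Proposition \ref{polprop}}

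The plan is to reduce to a local computation in a Hess Darboux chart and to verify there that $H \bnu \Psi$ is annihilated by each of $\partial_{p_1},\dots,\partial_{p_n}$, which by item~2 of Hess' theorem locally span $J$ (recall $\partial_{p_j}=-\xi_{q^j}$). This is enough: being $J$-polarized is a pointwise condition, the lifts $(\partial_{p_j})^{\#}=\partial_{p_j}$ locally generate the lifted polarization, and the $\bnu$-product is globally well defined, so checking the vanishing chart by chart suffices. In such a chart the hypotheses read $\partial_{p_j}\Psi=0$ (polarization) and $\partial_\theta\Psi=\ii\Psi$ ($U(1)$-equivariance), while from \eqref{connection} the relevant horizontal lifts are $(\partial_{p_k})^{\#}=\partial_{p_k}$ and $T_k:=(\partial_{q^k})^{\#}=\partial_{q^k}-(p_k/\hbar)\partial_\theta$, so that $\nu^{\#}=\sum_k\partial_{p_k}\otimes T_k$. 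The first thing I would record are the three commutators that drive the whole argument: $[\partial_{p_j},\partial_{p_k}]=0$, $[\partial_\theta,T_k]=0$, and the crucial defect $[\partial_{p_j},T_k]=-(\delta_{jk}/\hbar)\,\partial_\theta$. The moral is that the two ``legs'' of $\nu^{\#}$ commute separately but not across the two slots, and the failure is a central multiple of the Reeb field.

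Next I would compute $\partial_{p_j}(H\bnu\Psi)$ by viewing $H\bnu\Psi=m\circ\exp\!\big((\hbar/\ii)\,\nu^{\#}\big)(H\otimes\Psi)$ as the restriction to the diagonal of the corresponding function on $Y\times Y$, so that applying $\partial_{p_j}$ afterwards amounts to applying $\partial_{p_j}^{(1)}+\partial_{p_j}^{(2)}$ beforehand, with superscripts labelling the two factors and $D:=\sum_k\partial_{p_k}^{(1)}\otimes T_k^{(2)}$ the driver. The first piece is immediate: $\partial_{p_j}^{(1)}$ commutes with $D$, hence with the exponential, and yields $\exp((\hbar/\ii)D)\big((\partial_{p_j}H)\otimes\Psi\big)$. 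For the second piece, set $C:=\partial_{p_j}^{(2)}$ and $E:=[C,D]=-(1/\hbar)\,\partial_{p_j}^{(1)}\otimes\partial_\theta^{(2)}$; the three commutators above give $[E,D]=[E,C]=0$, so $E$ is central and the Heisenberg-type identity $C\,e^{tD}=e^{tD}(C+tE)$ holds. Taking $t=\hbar/\ii$ and applying this to $H\otimes\Psi$, the $C$-term dies because $\partial_{p_j}\Psi=0$, while the $E$-term evaluates---using $\partial_\theta\Psi=\ii\Psi$ and the precise factor $\hbar/\ii$---to exactly $-(\partial_{p_j}H)\otimes\Psi$. Thus $\partial_{p_j}^{(2)}$ contributes $-\exp((\hbar/\ii)D)\big((\partial_{p_j}H)\otimes\Psi\big)$, cancelling the first piece, and $\partial_{p_j}(H\bnu\Psi)=0$ for every $j$.

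I expect the one genuinely delicate point to be exactly this cross-slot non-commutativity: a naive ``the derivative falls on $H$ or on $\Psi$'' bookkeeping leaves an uncancelled $(\partial_{p_j}H)\bnu\Psi$, and the entire content of the proposition is that the commutator defect, being central and proportional to $\partial_\theta$, turns an equivariant $\Psi$ back into a compensating copy of $(\partial_{p_j}H)\bnu\Psi$ with the opposite sign. This is where Hess' normal form (which makes $J$ a coordinate distribution and so pins down the shape of $\nu^{\#}$) and the numerical normalization $\ii\lambda=\hbar$ are both used. An equivalent, purely combinatorial route---probably closer to the ``expand and reindex'' style of Proposition \ref{BundleDQ}---is to expand the exponential, differentiate term by term with the Leibniz rule, and match the order-$(N{+}1)$ contribution of the first type against the order-$N$ contribution of the second after relabelling; the arithmetic $\tfrac{1}{N!}\cdot N=\tfrac{1}{(N-1)!}$ together with $(\hbar/\ii)^{N}(-\ii/\hbar)=-(\hbar/\ii)^{N-1}$ produces the same cancellation. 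Finally, it is worth noting that running the identical computation with the anti-normal tensor $\mu$ in place of $\nu$ does \emph{not} give a vanishing result---the defect then lands on $\partial_{p_j}\Psi$, which is zero, so nothing compensates the $\partial_{p_j}H$ term---consistent with the fact that it is $\bnu$, and not $\bmu$, that is asserted to be a quantum product.
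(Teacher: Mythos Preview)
Your proof is correct and rests on the same mechanism as the paper's, but packages it more cleanly. The paper expands $H\bnu\Psi$ via the multinomial theorem, applies $\partial_{p_\ell}{}^{\#}$ term by term, derives the identity $\partial_{p_\ell}{}^{\#}(\partial_{q^m}{}^{\#})^j = (\partial_{q^m}{}^{\#})^j\partial_{p_\ell}{}^{\#} - \delta^\ell{}_m\,(j/\hbar)\,(\partial_{q^m}{}^{\#})^{j-1}\eta$ by iterating the single commutator, and then reindexes $j_\ell \rightsquigarrow j_\ell+1$ to exhibit the cancellation against the $\partial_{p_\ell}H$ terms---precisely the ``expand and reindex'' route you sketch in your final paragraph. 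Your main argument bypasses this combinatorics by encapsulating it in the adjoint identity $C\,e^{tD}=e^{tD}(C+tE)$, valid because $E=[C,D]$ is central; this is exactly the closed form of the paper's iterated-commutator computation. Both routes rest on the same three ingredients: the commutator $[\partial_{p_j}{}^{\#},\partial_{q^k}{}^{\#}]=-(\delta_{jk}/\hbar)\,\eta$, the centrality of the Reeb field (equivalently $[\eta,\partial_{q^k}{}^{\#}]=0$), and the equivariance $\eta[\Psi]=\ii\Psi$, which together turn the defect into the compensating $-(\partial_{p_j}H)\otimes\Psi$. Your version makes the cancellation manifest without any multinomial bookkeeping; the paper's has the virtue of being fully explicit and not asking the reader to supply the Heisenberg-type identity.
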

\begin{proof}
It suffices to show that for any $\ell$ we have   $\partial_{p_\ell}{}^\#[H \bnu \Psi] = 0.$
A multinomial expansion of \eqref{qp} gives   
\begin{align*}
H \bnu \Psi = \sum_{k\in \mathbb N}\left(\frac{\hbar}{\ii}\right)^k &  \frac{ 1 }{k!}   \sum_{j_1+ \cdots + j_n=k} 
\left(\begin{array}{cc}
k \\
j_1 \cdots j_n
\end{array}
\right) \\[2ex]
& \times \Big( (\partial_{p_{1}})^{j_1} \cdots (\partial_{p_{n}})^{j_n} [H] 
  \times ( \partial_{q^{1}}{}^{\#})^ {j_1} \cdots ( \partial_{q^{n}}{}^{\#})^ {j_n}[\Psi]\Big),
\end{align*}
where we have used the fact that the vector fields $\partial_{q^m}{}^\#$ commute (as $K$ is a polarization). Then
\begin{align}
\partial_{p_\ell}{}^\#[H \bnu \Psi] =  & \sum_{k \in \mathbb N} \left(\frac{\hbar}{\ii}\right)^k \frac{1}{k!}  \sum_{j_1+ \cdots +j_n=k}  \left(\begin{array}{cc}
k \\
j_1 \cdots j_n
\end{array}
\right)
\label{hairy eq} \\[2ex]
&\  \times \Big( \partial_{p_\ell} (\partial_{p_{1}})^{j_1} \cdots (\partial_{p_{n}})^{j_n} [H] 
 \times ( \partial_{q^{1}}{}^{\#})^ {j_1} \cdots ( \partial_{q^{n}}{}^{\#})^ {j_n}[\Psi] \nonumber \\[2ex]
& \ \ \  \ + (\partial_{p_{1}})^{j_1} \cdots (\partial_{p_{n}})^{j_n} [H] 
  \times\partial_{p_\ell}{}^\#( \partial_{q^{1}}{}^{\#})^ {j_1} \cdots ( \partial_{q^{n}}{}^{\#})^ {j_n}[\Psi]
\Big). \nonumber
\end{align}

Before proceeding, we note the following facts. By the definition of curvature, the prequantization condition ${\rm curv\ } \alpha = \omega/\hbar$ and \eqref{connection},
\begin{eqnarray*}
[\partial_{p_\ell}{}^\#, \partial_{q^m}{}^\# ]   & = & [\partial_{p_\ell}{}, \partial_{q^m}]^\# - \frac{1}{\hbar}\omega(\partial_{p_\ell}{}, \partial_{q^m})\eta \\[2ex]
& = & 0 -\frac{1 }{\hbar}\delta^\ell{}_m\ps \eta \end{eqnarray*}
which follows from \eqref{pinf}. Similarly, we compute that
$ [\eta, \partial_{q^m}{}^\# ]  = 0.$

Now, we manipulate the second factor of the last term in the sum \eqref{hairy eq}. Consider the quantity
\begin{equation}
\label{commutator}
\partial_{p_\ell}{}^\# (\partial_{q^m}{}^\#)^{j} = (\partial_{q^m}{}^\#)^j\partial_{p_\ell}{}^\#  \\
 +  [\partial_{p_\ell}{}^\#, (\partial_{q^m}{}^\#)^j ].
\end{equation}
In the second term here, expand
\begin{eqnarray*}
 [\partial_{p_\ell}{}^\#, (\partial_{q^m}{}^\#)^j ]  =   \partial_{q^m}{}^\#[ \partial_{p_\ell}{}^\# , (\partial_{q^m}{}^\#)^{j-1} ] \nonumber 
 +   [\partial_{p_\ell}{}^\#, \partial_{q^m}{}^\# ] (\partial_{q^m}{}^\#)^{j-1}.
\end{eqnarray*}
Iterating this last computation $(j-1)$-times and taking into account the facts listed above, equation \eqref{commutator} yields
\begin{equation}
\label{almostdone}
 \partial_{p_\ell}{}^\# (\partial_{q^m}{}^\#)^j =  (\partial_{q^m}{}^\#)^j\partial_{p_\ell}{}^\#  - \delta^\ell{}_m \ps \frac{j}{\hbar} \ps (\partial_{q^m}{}^\#)^{j-1} \eta .
\end{equation}
Substituting \eqref{almostdone} into  the second factor of the last term in \eqref{hairy eq} 
we eventually obtain
\begin{align*}\label{comm}
\partial_{p_\ell}{}^\#( \partial_{q^{1}}{}^{\#})^ {j_1} \cdots  & ( \partial_{q^{n}}{}^{\#})^ {j_n}[\Psi]  \\  & = 
( \partial_{q^{1}}{}^{\#})^ {j_1} \cdots ( \partial_{q^{n}}{}^{\#})^ {j_n}\partial_{p_\ell}{}^\#[\Psi] \\
& \mbox{\ \ \ } - \frac{1}{\hbar}\sum_{m = 1}^n \delta^\ell{}_m\ps j_m ( \partial_{q^{1}}{}^{\#})^ {j_1} \cdots ( \partial_{q^m}{}^{\#})^ {j_m -1}\cdots ( \partial_{q^{n}}{}^{\#})^ {j_n}\eta[\Psi].
\end{align*}
Observe that the first term on the r.h.s.\;here vanishes as $\Psi$ is $J$-polarized. Recalling the  $U(1)$-equivariance of $\Psi$, so that $\eta[\Psi] = \ii\Psi$, the expression above reduces to
\begin{align*}
\partial_{p_\ell}{}^\#( \partial_{q^{1}}{}^{\#})^ {j_1} \cdots  
& ( \partial_{q^n}{}^\#)^ {j_n}[\Psi]  = - \frac{\ii}{\hbar} \ps j_\ell ( \partial_{q^1}{}^{\#})^ {j_1} \cdots 
( \partial_{q^\ell}{}^\#)^{j_\ell -1}\cdots ( \partial_{q^{n}}{}^{\#})^ {j_n}[\Psi].
\end{align*}

Thus the formula for $\partial_{p_\ell}{}^\#[H \bnu \Psi] $  becomes
\begin{align*}
\sum_{k\in \mathbb N} & \sum_{j_1 + \cdots + j_n=k}  \left(\frac{\hbar}{\ii}\right)^k \frac{1}{k!}   
\left(\begin{array}{cc}
k \\
j_1 \cdots j_n
\end{array}
\right) \nonumber \\[2ex]
& \times \Big( \partial_{p_\ell} (\partial_{p_{1}})^{j_1} \cdots (\partial_{p_{n}})^{j_n} [H] 
 \times ( \partial_{q^{1}}{}^{\#})^ {j_1} \cdots ( \partial_{q^{n}}{}^{\#})^ {j_n}[\Psi] \nonumber \\[2ex]
&\ \ \  - \frac{\ii}{\hbar} \ps j_\ell\ps (\partial_{p_{1}})^{j_1} \cdots (\partial_{p_{n}})^{j_n} [H]   \times ( \partial_{q^{1}}{}^{\#})^ {j_1} \cdots ( \partial_{q^{\ell}}{}^{\#})^ {j_\ell -1}\cdots ( \partial_{q^{n}}{}^{\#})^ {j_n}[\Psi]\Big).
\end{align*}
Finally, rewriting 
$$(\partial_{p_1})^{j_1} \cdots  (\partial_{p_n})^{j_n}= (\partial_{p_\ell}) (\partial_{p_{1}})^{j_1} \cdots (\partial_{p_{\ell}})^{j_\ell -1}\cdots (\partial_{p_{n}})^{j_n}$$
in the second term of the sum, reindexing $j_\ell \rightsquigarrow j_\ell +1$ therein  and observing that
$$\frac{j_\ell +1}{k}
\left(\begin{array}{cc}
k \\
j_1 \cdots j_\ell + 1 \cdots j_n
\end{array}
\right) =
\left(\begin{array}{cc}
k-1 \\
j_1 \cdots j_\ell \cdots j_n
\end{array}
\right),
$$
the two terms are seen to cancel and we are done.
\end{proof}

Thus we have proven the existence of a quantum product $\bnu$ on a (prequantization of a) flat bi-polarized manifold $X$. Interestingly Proposition \ref{polprop}  is no longer valid if we use the anti-normal tensor $\mu  =  - \partial_{q^k} \otimes \partial_{p_k}$ as the driver of $\star$, cf. \S\ref{an}.

 Proposition \ref{polprop} also holds when the quantum product is that associated to the Poisson bivector itself, as we now prove. Let us denote this `Moyal quantum product' by $\bpi$.

\begin{cor}
If $H$ is an observable and $\Psi$ is polarized, then $H \bpi \Psi$ is also polarized. 
\end{cor}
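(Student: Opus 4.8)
The plan is to reduce the statement to Proposition~\ref{polprop} by showing that, on a flat bi-polarized manifold, $\bpi$-multiplication by an observable coincides with $\bnu$-multiplication by a modified observable. Writing $\Delta := \sum_j \partial_{p_j}\partial_{q^j}$, I claim
\[
H \bpi \Psi \;=\; \widetilde H \bnu \Psi, \qquad \widetilde H := \exp\!\Bigl(\tfrac{\hbar}{2\ii}\,\Delta\Bigr)[H].
\]
By the same affine-invariance argument the text uses to globalize $\Lambda^k$ (the transitions \eqref{overlaps} are affine with constant, mutually contragredient linear parts), $\Delta$, and hence $\widetilde H$, is globally well-defined; in particular $\widetilde H$ is again an observable. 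Granting the claim, the corollary is immediate: $\widetilde H \in \cO$ and $\Psi \in \cH_J$, so Proposition~\ref{polprop} gives that $\widetilde H \bnu \Psi = H \bpi \Psi$ is $J$-polarized. Since both sides of the claimed identity are globally defined, it suffices to verify it on a single affine Darboux chart.

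On such a chart, decompose $\pi^\# = \nu^\# + \mu^\#$ as bidifferential operators, with $\nu^\# = \partial_{p_k}\otimes(\partial_{q^k})^\#$ and $\mu^\# = -(\partial_{q^k})^\#\otimes\partial_{p_k}$ (recall $(\partial_{p_k})^\# = \partial_{p_k}$). Two facts make everything collapse: (i) $\mu^\#$ kills $H\otimes\Psi$ outright, since its $\Psi$-slot derivative $\partial_{p_k}$ annihilates the $J$-polarized $\Psi$; and (ii) more generally, when $\mu^\#$ acts after a string of $\nu^\#$'s, the $\partial_{p_k}$ must be commuted past the accumulated $(\partial_{q^m})^\#$'s, which---by the curvature relation $[\partial_{p_\ell}{}^\#,\partial_{q^m}{}^\#] = -\hbar^{-1}\delta^\ell{}_m\,\eta$, with $\eta$ central, $\eta[\Psi]=\ii\Psi$ and $\eta[H]=0$---produces, exactly as in the proof of Proposition~\ref{polprop}, the operator identity
\[
\mu^\#\circ(\nu^\#)^{s}\,(G\otimes\Psi) \;=\; \frac{\ii\,s}{\hbar}\,(\nu^\#)^{s-1}\bigl((\Delta G)\otimes\Psi\bigr)
\]
for every observable $G$. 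An induction on $k$ via $a_{k+1}=\nu^\#a_k + \mu^\#a_k$, where $a_k:=(\pi^\#)^k(H\otimes\Psi)$, together with the elementary identity $\binom{k}{2r-1}(2r-1)=(k-2r+2)\binom{k}{2r-2}$, then yields
\[
(\pi^\#)^k(H\otimes\Psi)=\sum_{r\ge 0}\binom{k}{2r}(2r-1)!!\Bigl(\tfrac{\ii}{\hbar}\Bigr)^{r}(\nu^\#)^{k-2r}\bigl((\Delta^{r}H)\otimes\Psi\bigr).
\]
Substituting this into $H\bpi\Psi=\sum_k\tfrac1{k!}\bigl(\tfrac{\hbar}{\ii}\bigr)^k\,m\circ(\pi^\#)^k(H\otimes\Psi)$, reindexing $k=s+2r$ and using $\tfrac1{(s+2r)!}\binom{s+2r}{2r}=\tfrac1{s!(2r)!}$ together with $\tfrac{(2r-1)!!}{(2r)!}=\tfrac1{2^r r!}$, the double sum factors as $\sum_r\tfrac1{r!}\bigl(\tfrac{\hbar}{2\ii}\bigr)^r(\Delta^rH)\bnu\Psi$, which by bilinearity of $\bnu$ equals $\widetilde H\bnu\Psi$.

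The main obstacle is fact~(ii): verifying the operator identity $\mu^\#\circ(\nu^\#)^s=\tfrac{\ii s}{\hbar}(\nu^\#)^{s-1}\circ(\Delta\otimes\Id)$ on $\cO\otimes\cH_J$ by commuting each $\partial_{p_k}$ through the string of $(\partial_{q^m})^\#$'s and collecting the resulting $\eta$-terms. This is of precisely the same nature as---and no harder than---the central computation in the proof of Proposition~\ref{polprop}; the surrounding combinatorics (the double factorials and the $k=s+2r$ reshuffling) is routine. Conceptually, the passage $H\rightsquigarrow\widetilde H$ is nothing but the familiar equivalence between a Weyl-ordered and a normal-ordered star product, here seen to intertwine the two module structures on $\cH_J$.
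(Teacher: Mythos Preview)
Your argument is correct and arrives at exactly the same intertwining formula as the paper (your $\Delta$ is the negative of the paper's Yano Laplacian $\varDelta$, and the sign in the exponent compensates). The route, however, is genuinely different. The paper first splits $\pi^\# = \sum_k \pi_k^\#$ into commuting pieces, writes each as $\nu_k^\# + \mu_k^\#$, checks that on $H\otimes\Psi$ the commutator $[\nu_k^\#,\mu_k^\#]$ acts as $-\tfrac{\ii}{\hbar}\partial_{q^k}\partial_{p_k}\otimes 1$ while the double commutators vanish, and then invokes the Baker--Campbell--Hausdorff--Zassenhaus formula to factor $\exp\bigl((\hbar/\ii)\pi^\#\bigr)$ directly. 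You instead expand $(\pi^\#)^k$ by hand: the single operator identity $\mu^\#\circ(\nu^\#)^s = \tfrac{\ii s}{\hbar}(\nu^\#)^{s-1}\circ(\Delta\otimes\Id)$ on $\cO\otimes\cH_J$ (which, as you note, is just the commutation \eqref{almostdone} from Proposition~\ref{polprop} repackaged) lets an induction close on the double-factorial formula, and a Cauchy-product resummation finishes. Your approach is more elementary---it avoids BCHZ and the preliminary splitting into one-dimensional pieces---at the cost of tracking the combinatorics $\binom{k}{2r}(2r-1)!!$ explicitly; the paper's approach is slicker and makes the ``Heisenberg'' structure (central commutator, vanishing higher commutators) do the bookkeeping. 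Either way the content is the same Weyl--to--normal equivalence, and the appeal to Proposition~\ref{polprop} then gives the polarized conclusion.
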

\begin{proof}
First of all, expand $\pi = \sum_{k=1}^n \pi_k$, where
\begin{equation}
\pi_k  =  \partial_{p_k} \otimes \partial_{q^k}  -  \partial_{q^k} \otimes \partial_{p_k} 
\label{pik}
\end{equation}
(no sum). We routinely verify that the commutator $[\pi_j^\#,\pi_k^\#] = 0$,\footnote{\, By this we mean that $\pi_j^\#(\pi_k^\#(f\otimes g)) - \pi_k^\#(\pi_j^\#(f\otimes g)) = 0$ on $C^\infty(Y,\mathbb C) \otimes C^\infty(Y,\mathbb C)$.} whence
$$\exp \big( (\hbar/\ii)\pi^\#\big) = \exp \big( (\hbar/\ii)\pi_1^\#\big) \circ \cdots \circ \exp \big( (\hbar/\ii)\pi_n^\#\big).
$$

Next, write \eqref{pik} as $\pi_k = \nu_k + \mu_k$. Then for $H$ an observable and $\Psi$ a polarized wave function we compute
\begin{equation*}
[\nu_k^\#,\mu_k^\#] (H \otimes \Psi)=  - \frac{\ii}{\hbar}\ps \partial_{q^k}  \partial_{p_k} [H] \otimes \Psi
\end{equation*}
and furthermore 
\begin{equation*}
[\nu_k^\#,[\nu_k^\#,\mu_k^\#]] (H \otimes \Psi)= 0 
= [\mu_k^\#,[\nu_k^\#,\mu_k^\#]] (H \otimes \Psi).
\end{equation*}
Invoke the Baker-Campbell-Hausdorff-Zassenhaus formula 
\begin{align*}
\exp(A+B)=  \exp(A)\,& \circ\,\exp(B)\,  \circ\,\exp \big(\! - \!(1/2)[A,B] \big)\\
& \circ \, \exp \big((1/6)[A,[A,B]]+(1/3)[B,[A,B]] \big)\,\circ\,\cdots
\end{align*}
with $A = (\hbar/\ii)\nu_k^\#$ and $B = (\hbar/\ii)\mu_k^\#$ 
and apply it to $H \otimes \Psi$. As the double commutators vanish and as $\exp \ns  \big((\hbar/\ii)\mu_k^\# \big)(H\otimes \Psi) = H \otimes \Psi$ we get 
\begin{align*}
\exp \ns \big((\hbar/\ii) \pi_k^\# \big) & (H\otimes\Psi ) \\
&=
\exp \ns \big((\hbar/\ii) \nu_k^\# \big) \circ\,
\exp \ns\big(\! - \! (\ii\hbar/2)\ps  \partial_{q^k}  \partial_{p_k}\otimes 1\big) (H\otimes\Psi).
\end{align*}
Summing over $k$ this yields
\begin{align*}
\exp \ns \big((\hbar/\ii) \pi^\# \big) & (H\otimes\Psi ) \\
&=
\exp \ns\left((\hbar/\ii) \nu^\#\right) \circ\,
\exp \ns\left((\ii\hbar/2) \varDelta \otimes 1\right) (H\otimes\Psi)
\end{align*}
where $\varDelta = -\sum_{k=1}^n\partial_{q^k}  \partial_{p_k}$ is the Yano Laplacian. From this we finally obtain
\begin{equation}
\label{Agarwal}
F \bpi \Psi = \big[\! \exp \ns\big((\ii\hbar/2) \varDelta\big)F \big]\bnu \Psi
\end{equation}

According to Proposition \ref{polprop}, the r.h.s. here is a polarized wave function, so that $H \bpi \Psi$ is also a polarized wave function.
\end{proof}

Once we have a quantum product at our disposal, we may define quantum operators by means of \eqref{qop}. The final step is to construct the quantum Hilbert space; 
see \cite{Sn1980} for the details of this construction--- it is not essential for our present purposes.

\section{Examples}

Here we specialize to $T^*\mathbb R = \mathbb R^2$ for simplicity; the generalization to $T^*\mathbb R^{n}$ with $n>1$ is immediate.  We apply our method to generate various quantizations of $(\mathbb R^{2},\omega =dp \wedge dq)$.



\subsection{Normal Ordering Quantization}

We first consider the normal $2$-tensor field $\nu=\partial_{p}\otimes  \partial_{q}$.
As the polarization $J$  on $X$ we choose the vertical one, and for $K$ the horizontal one. Then a $J$-polarized wave function on $Y$ has the form $\Psi = \psi(q) e^{{\rm i}\theta}$ where $\psi \in C^\infty(\mathbb R,\mathbb C)$. Observe that the expression above for $\nu$ is exactly of the  form \eqref{decomp}.

As star product on $X$ we take the normal (or standard) one
so that the corresponding  prequantum product is
\begin{equation}
\label{DP}
f \bnu g = m\circ \exp\! \left( \frac{\hbar}{\ii}\ps \nu^\#\right)(f\otimes g).
\end{equation}
In this expression the normal Souriau tensor is 
\begin{equation}
\label{nst}
\nu{}^{\#}=\partial_{p}{}^\# \otimes  \partial_{q}{}^\#
\end{equation}
where the horizontal lifts are $\partial_{p}{}^\# = \partial_{p}$ and $\partial_{q}{}^\# = \partial_{q} - (p/\hbar)\partial_\theta$. Although this expression for $\nu{}^{\#}$ does not have constant coefficients there are no ambiguities in powers of $\nu{}^{\#}{}$ and hence in $\bnu$ as long as we utilize \eqref{nst}, consistent with \eqref{liftoflambda}.

For an observable $F\in \cO$ and a polarized wave function $\Psi$ on $Y$, we thus have 
\begin{equation*}
\nu^\#(F\otimes\Psi)=\partial_p F\otimes  \left(\partial_{q} - \frac{\ii p}{\hbar}\right)\Psi
\label{tnutFtPsi}
\end{equation*}  
(since $\partial_\theta \Psi = \ii \Psi$) and so
we end up with the simple formula
\begin{equation}
F \bnu \Psi = \sum_{k\in\bN}\left(\frac{\hbar}{\ii}\right)^k\frac{1}{k!}\,\partial_p{}^k F \left(\partial_{q} - \frac{\ii p}{\hbar}\right)^k\Psi.
\label{Qter}
\end{equation}

We construct quantum operators $\mathcal Q_\nu F$ according to \eqref{qop}. The series \eqref{Qter} clearly terminates for observables which are polynomial in momenta. In particular, we compute
\begin{equation}
\label{nq}
\mathcal Q_\nu \bigg(\sum_nA_n(q)p^n\bigg)[\Psi] =\sum_n \left(\frac{\hbar}{\ii}\right)^nA_n(q)\psi^{(n)}(q)e^{\ii \theta}
\end{equation}
from which it is apparent that our choice of quantum product in this instance does in fact yield normal-ordering (or standard-ordering) quantization. Notice that (\ref{nq}) applied to $J$-preserving observables, i.e., those of the form $F(p,q)=f(q)p+g(q)$, is consistent with geometric quantization theory.  The quantum representation space is $L^2(\mathbb R,dq)$ as always.

In the context of this example, \eqref{qop} resembles equation (13) of \cite{BNW1998}. In this reference it is however necessary to restrict to the zero section of $T^*\bR$, i.e., $p=0$, in order to be in business. Here there is no such restriction.

We emphasize the crucial role played by the `normal form' \eqref{decomp} with  $s= \partial_{p}$ spanning $J$: In \eqref{DP} the normal tensor $\nu$ {\em must}  be written in the form 
$\nu = \partial_p \otimes \partial_q.$
If instead we use $- \partial_q \otimes \partial_p$ then Proposition \ref{polprop} fails, and we do not obtain a quantization.

\subsection{Example: Anti-normal Ordering Quantization}\label{an}

We interchange the factors in the normal tensor $\nu$, to obtain the `second half' of the Poisson bivector
\begin{equation}
\mu= - \partial_{q}\otimes  \partial_{p},
\label{tmu}
\end{equation}  
\emph{as well as} the polarizations ($J \leftrightarrow  K$). Again \eqref{tmu} is in exactly the normal form \eqref{decomp}, where now $s = - \partial_q$. Then a $K$-polarized wave function has the form $\Phi = \phi(p) e^{\ii (pq/\hbar+\theta)}$ with $\phi \in C^\infty(\mathbb R,\mathbb C).$

The anti-normal quantum product on $Y$ (i.e., \eqref{DP} with $\mu$ in place of~$\nu$) is
\begin{equation*}
F \bmu \Phi = \sum_{k\in\bN}\left(-\frac{\hbar}{\ii}\right)^k\frac{1}{k!}\,\partial_q{}^k F \left[ \left(\partial_{p} +\frac{\ii q}{\hbar}\right)^k\phi \right] e^{\ii (pq / \hbar + \theta)}.
\label{antiQter}
\end{equation*}

The analysis now follows as in the case of normal-ordering quantization, where now the quantum representation space is $L^2(\mathbb R,dp)$. We find for polynomials in $q$ that
\begin{equation}
\label{anq}
\mathcal Q_\mu  \bigg(\sum_nB_n(p)q^n \bigg)[\Phi] =\sum_n \left(-\frac{\hbar}{\ii }\right)^nB_n(p)\phi^{(n)}(p)e^{\ii (pq/\hbar + \theta)}.
\end{equation}
The Fourier transform intertwines the $K$ and $J$ representations; in the former, \eqref{anq} translates into
\begin{equation*}
\mathcal Q_\mu \bigg(\sum_nA_n(q)p^n \bigg)[\Psi] =\sum_n \left(\frac{\hbar}{\ii}\right)^n\frac{d^n}{dq^n}\big(A_n(q)\psi(q)\big)e^{\ii \theta},
\end{equation*}
justifying our terminology.

\subsection{Weyl Quantization}

Our basic setup here is the same as in normal-ordering quantization, except as star product on $\mathbb R^2$ we take the Moyal one with  $\Lambda = \pi$. The quantum product is then $\bpi$ considered previously and \eqref{Agarwal}, expressed in the form
\begin{equation}
\label{agarwal}
\mathcal Q_\pi F = \mathcal Q_\nu\big(\! \exp \ns\big((\ii \hbar/2) \varDelta\big)F\big)
\end{equation}
shows that our version of ``Weyl quantization,'' that is $\mathcal Q_\pi$, is obtained (as expected) from our version of  normal quantization $\mathcal Q_\nu$ via the Agarwal correction (see  \cite{BNW1998,AW1970}). Of course, Weyl quantization produces symmetric operators in contrast to either normal or anti-normal quantization.

Thus far we have  focussed essentially on polynomial observables $F$, in which case the series $F\bullet\Psi$ terminates. But we may still apply our technique, at least formally, when this is not so. As an illustration we \emph{directly} quantize the observable $1/p$ on $\dot T^*\mathbb R \approx \mathbb R \times (\mathbb R \setminus \{0\}).$ As $\varDelta(1/p) = 0$, Weyl quantization will coincide with normal quantization;  a calculation using \eqref{Qter} then yields
\begin{equation*}
\cQ_\pi\!\left(\frac{1}{p}\right)[\Psi] = \frac{1}{p}\sum_{k=0}^\infty p^{-k}\left(\ii\hbar \ps \frac{d}{dq} +p\right)^k [\Psi].
\end{equation*}
 This expression, if convergent, should be a $J$-polarized wave function,\footnote{\  Note that Proposition \ref{polprop} can be applied only if we know that $(1/p) \bnu \Psi$ converges. } and so must be independent of $p$. The only way this can be the case is if $p = -\ii \hbar\ps {d}/{d q}$ in the sense of the functional calculus, so that 
 $$\cQ_\pi\left(\frac{1}{p}\right)[\Psi] = \left(-\ii \hbar \frac{d}{dq}\right)^{-1}\Psi = \frac{\ii}{\hbar}\left( \int_0^q \psi(t)\, dt \right)e^{\ii \theta}.$$
This formal calculation coincides with one based on the integral formula for the Moyal star product; see equation (2.8) in \cite{DS2002}.

\subsection{The Bargmann-Fock Representation}

Start now with $X = \bC$ so that so that the symplectic form reads
\begin{equation*}
\omega=\frac{1}{2\ii}d\bz\wedge{}dz
\end{equation*}
whence 
\begin{equation*}
\pi=2\ii\ps \partial_\bz\wedge\partial_z.
\end{equation*}
The \textit{Bargmann normal tensor}---we should rather say the \textit{Wick tensor}---is defined according to  the usual procedure $\wedge\rightsquigarrow\otimes$, viz.,
\begin{equation*}
\label{Wick}
\nu = 2\ii\ps \partial_\bz\otimes\partial_z.
\end{equation*}
A natural polarization to consider is the antiholomorphic one $J =  {\rm span}_{\mathbb C} \{\partial_{\bar z}\}$, and we take $K = \bar J$.

We have $Y=\mathbb C\times{}S^1$, endowed with the prequantum $1$-form
\begin{eqnarray*}
\alpha&=&\frac{1}{4\ii\hbar}(\bz{}dz-d\bz{}z)+d\theta
\end{eqnarray*}
Polarized wave functions $\Psi:Y\to\mathbb C$ are thus of the form
\begin{eqnarray*}
\Psi = \psi(z)e^{-|z|^2/(4\hbar)}e^{\ii\theta}
\end{eqnarray*}
where $\psi:\bC\to\bC$ is holomorphic: $\partial_\bz\psi=0$. This is the Bargmann representation, the Hermitian inner product of two such wave functions $\Phi$ and $\Psi$ being given by $\langle\Phi,\Psi\rangle=\int_\bC{\!\overline{\Phi}\Psi\,\omega}$.
In the wake of \S 5.A we readily compute
\begin{equation*}
\cQ_\nu \bigg(\sum_nC_n(z)\left(\frac{\bar z}{2\ii}\right)^n \bigg)[\Psi] =\sum_n \left(\frac{\hbar}{\ii }\right)^nC_n(z)\psi^{(n)}(z)e^{-|z|^2/(4\hbar)}e^{\ii\theta}.
\end{equation*}
We can, hence, confirm that our quantization procedure also applies in the K\"ahler case.

Dealing with the Poisson bivector $\pi$, we get accordingly the ``Bargmann'' quantization mapping $\cQ_\pi$ which retains the relationship \eqref{agarwal} with $\cQ_\nu$, where now $\varDelta = -2\ii \partial_{\bar z}\partial_z$ is the K\"ahlerian Laplacian on $\mathbb C$. This formula enables us to find, e.g., 
$$\cQ_\pi (|z|^2)[\Psi] = 2\hbar \bigg(\!z\psi'(z) + \frac{1}{2}\psi(z)\! \bigg)e^{-|z|^2/(4\hbar)}e^{\ii\theta}$$
as expected.

\section{Conclusions}

Our approach is a hybrid of the geometric quantization (GQ) and deformation quantization (DQ) procedures.\footnote{\, As such it is reminiscent of that of \cite{GW2009}.} One the one hand, the GQ of $(X,\omega)$ is unsatisfactory as it quantizes too few observables, viz., those whose Hamiltonian flows preserve the chosen polarization; this point has recently been emphasized in \cite{No2008}. Our technique does not impose this restriction, and consequently we needn't resort to pairing techniques 
to quantize `complicated' observables. On the other, the DQ of $(X,\omega)$ has long been recognized to require additional elements such as a prequantization $(Y,\alpha)$ and/or a polarization. Indeed, the well-known fact that all symplectic (and even Poisson) manifolds admit DQs indicates that DQ cannot be a `genuine' quantization; see also  \cite{Fr1978} and \cite{Hu1980}. There are also related issues with the construction of the quantum state space (\cite{Wa2005}, \cite{No2009}). With our technique, however, this  proceeds as just as in GQ, since polarized wave functions  are already present in $C^\infty(Y,\mathbb C)$. So here we have combined the best of both; in the process we obtain a quantization of, e.g., all polynomials on $\mathbb R^{2n}$. Of course, such a quantization does not satisfy Dirac's ``Poisson bracket $\rightsquigarrow $ commutator'' rule, but this cannot be helped \cite{Go1999}.  In its place, however,  \eqref{convolve} shows that the ``star-commutator $\rightsquigarrow $ commutator'' rule does hold. 

Our  goal in this paper was to show that a prequantization, a polarization and a quantum product can lead to a viable quantization scheme. We have accomplished this under certain circumstances. Our assumption that the phase space is a flat bi-polarized manifold is quite restrictive, even though it covers the physically most important case of Euclidean space. It also covers symplectic tori, which we hope to study next.  As well our construction of a quantum product depended crucially on the exponential nature of the star-products considered and the normal form of Theorem 4.1. More varied and  interesting examples (e.g., cotangent bundles and K\"ahler manifolds) will likely require entirely different techniques of constructing quantum products.

It might also be interesting to extend our construction of a quantum product to the symplectization of the contact manifold $(Y,\alpha)$ in the spirit of what \cite{Ko2005} did for prequantization.

\section{Acknowledgements}

We are greatly indebted to J. E. Andersen, M. Bordemann, R. L. Fernandes, N. P. Landsman, V. Ovsienko and S. Waldmann for enlightening discussions. We also thank T. Masson for helping us with the \LaTeX \ code.


\begin{thebibliography}{99}

  \bibitem{Ma1991}
C.-M. Marle: On Jacobi manifolds and Jacobi bundles. In  {\em Symplectic Geometry, Groupoids, and Integrable Systems}, P. Dazord and A. Weinstein, Eds., Publ. {\textbf 20},  227, Springer, New York 1991.

\bibitem{Sn1980}
J. \'Sniatycki: {\em Geometric Quantization and Quantum Mechanics}, Appl. Math. Sci. \textbf{30}, Springer, New York 1980.

\bibitem{So1997}
J.-M. Souriau: {\em Structure of Dynamical Systems} (English translation), Prog. in Math.  \textbf{149}, Birkha\"user, Boston 1997.

\bibitem{DS2002}
G. Dito and D. Sternheimer: Deformation quantization: genesis, developments and metamorphoses. In {\em Deformation Quantization}, G. Halbout, Ed.,  IRMA Lectures in Math. Theoret. Phys. {\textbf 1}, 9,  Walter de Gruyter, Berlin 2002.

\bibitem{BNWW2010}
 M. Bordemann, N. Neumaier, S. Waldmann and S. Weiss:  Deformation quantization of surjective submersions and principal fibre bundles. {\em J. reine angew. Math.} \textbf{639}, 1 (2010).

\bibitem{DWLe1983}
M. De Wilde and P.~B.~A. Lecomte: Existence of star-products and of formal deformations of the Poisson Lie algebra of arbitrary symplectic manifolds. {\em Lett. Math. Phys.} \textbf{7}, 487 (1983).

\bibitem{Bo1993}
N.G. Boyom: M\'etriques K\"ahl\'eriennes affinement plates de certaines vari\'et\'es symplectiques I. {\em Proc. Lon. Math. Soc.} \textbf{66}, 358 (1993).

\bibitem{EST2006}
F. Etayo, R. Santamar\'{i}a and U.~R. Tr\'{i}as:  The geometry of a bi-Lagrangian manifold. {\em Diff. Geom. Appl.} \textbf{24}, 33 (2006).

\bibitem{He1980}
H. Hess: Connection on symplectic manifolds and geometric quantization. In {\em Differential Geometric Methods in Mathematical Physics}, P.L. Garc\'{i}a, A. P\'erez-Rend\'on and J.-M. Souriau, Eds., Lecture Notes in Math. \textbf{836},  153, Springer, Berlin 1980.

 \bibitem{Li1954}
P. Libermann: Sur le probl\`eme d'\'equivalence de certaines structures infinit\'esimales. {\em Ann. Mat. Pura Appl.} {\textbf 36}, 2 (1954). 

\bibitem{BNW1998}
 M. Bordemann, N.   Neumaier  and S. Waldmann: Homogeneous Fedosov star products on cotangent bundles I: Weyl and standard ordering with differential operator representation. {\em Commun. Math. Phys.} \textbf{198}, 363 (1998).

\bibitem{AW1970}
G.S. Agarwal and E. Wolf: Calculus for functions of noncommuting operators and general phase-space methods in quantum mechanics I, II, III. {\em Phys. Rev.} {\textbf D2}, 2161, 2187, 2206 (1970).

\bibitem{GW2009}
S. Gukov and E. Witten: Branes and quantization. {\em Adv. Theor. Math. Phys.} \textbf {13}, 1445 (2009).

\bibitem{No2008}
C. N\"olle:  On the relation between geometric and deformation quantization. {\em arXiv}:0809.1946v1 [math-ph] (2008).

\bibitem{Fr1978}
C. Fronsdal: Some ideas about quantization. {\em Rep. Math. Phys.} \textbf{15}, 111 (1978).

\bibitem{Hu1980}
T.V. Huynh: Star-polarization: a natural link between phase space representation and operator representation of quantum mechanics. {\em Lett. Math. Phys.} \textbf{4}, 201 (1980).

\bibitem{Wa2005}
S. Waldmann: States and representations in deformation quantization. {\em Rev. Math. Phys.} \textbf{17}, 15 (2005).

\bibitem{No2009}
C. N\"olle: Geometric and deformation quantization. {\em arXiv}:\-0903.\-5336v2 [math-ph] (2009).

\bibitem{Go1999}
M.J. Gotay: On the Groenewold-Van Hove problem for $\mathbb R^{2n}$. {\em J. Math. Phys.} \textbf{40}, 2107 (1999).

\bibitem{Ko2005}
B. Kostant: Minimal coadjoint orbits and symplectic induction. In {\em The Breadth of Symplectic and Poisson Geometry: Festschrift in Honor of Alan Weinstein},  J.E. Marsden and T.~Ratiu, editors: 391, Birkh\"auser, Boston, 2005.

\end{thebibliography}
\end{document}